\DeclareSymbolFont{AMSb}{U}{msb}{m}{n}
\DeclareSymbolFontAlphabet{\Bbb}{AMSb}
\def\hb@xt@{\hbox to }
\let\oldendproof\endproof
\def\endproof{\qed\oldendproof}
\begin{document}

\title{Isometric Diamond Subgraphs} 

\author{David Eppstein}

\institute{Computer Science Department\\
Donald Bren School of Information \& Computer Sciences\\
University of California, Irvine\\
\email{eppstein@uci.edu}}

\maketitle   

\begin{abstract}
We describe polynomial time algorithms for determining whether an undirected graph may be embedded in a distance-preserving way into the hexagonal tiling of the plane, the diamond structure in three dimensions, or analogous structures in higher dimensions. The graphs that may be embedded in this way form an interesting subclass of the partial cubes.
\end{abstract}

\section{Introduction}

In graph drawing, one seeks a layout of a given graph that optimizes legibility as measured, e.g., by vertex separation, angular resolution, or area.  Nearly ideal drawings result from subgraphs of regular tilings of the plane by squares or hexagons: the angular resolution is bounded, vertices have uniform spacing, all edges have unit length, and the area is at most quadratic in the number of vertices. For \emph{induced subgraphs} of the square or hexagonal tiling, one can additionally determine the graph's edges from the vertex placement: two vertices are adjacent if and only if they are mutual nearest neighbors.

Unfortunately, grid drawings are hard to find: Bhatt and Cosmodakis~\cite{BhaCos-IPL-87} showed the NP-com\-plete\-ness of testing whether a given graph is a subgraph of a square tiling. Eades and Whitesides~\cite{EadWhi-TCS-96}, generalizing this and several related NP-completeness results, described the \emph{logic engine} proof technique whereby they proved that it is NP-hard to determine whether a graph has a realization as a nearest-neighbor graph in the plane, or as a planar graph with unit edge lengths. Although it does not seem to have been stated explicitly previously, the same logic engine technique proves in a straightforward way the NP-completeness of determining whether a given graph is a subgraph of the hexagonal tiling, or an induced subgraph of the square or hexagonal tilings.

We previously showed that a stronger constraint on the embedding alleviates the computational difficulty of finding it: a polynomial time algorithm can test whether a graph embeds \emph{isometrically} onto the square tiling, or onto any higher dimensional integer grid of fixed or variable dimension~\cite{Epp-EJC-05}. In an isometric embedding, the unweighted distance between any two vertices in the graph equals the $L_1$ distance of their placements in the grid. An isometric embedding must be an induced subgraph, but not all induced subgraphs are isometric. Isometric square grid embeddings may be directly used as graph drawings, while planar projections of higher dimensional embeddings can be used to draw any \emph{partial cube}~\cite{Epp-GD-04}, a class of graphs with many applications~\cite{EppFalOvc-08}.

Can we find similar embedding algorithms for other tilings or repeating patterns of vertex placements in the plane and space? In this paper, we describe a class of $d$-dimensional infinitely repeating patterns, the \emph{generalized diamond structures}, which include the tiling of the plane by regular hexagons and the three-dimensional molecular structure of the diamond crystal. As we show, we can recognize in polynomial time the finite graphs that can be embedded isometrically onto a generalized diamond of any fixed or variable dimension, whenever such an embedding exists, in polynomial time.

\section{Hexagons and diamonds from projected slices of lattices}

\begin{figure}[t]
\centering
\includegraphics[width=1.7in]{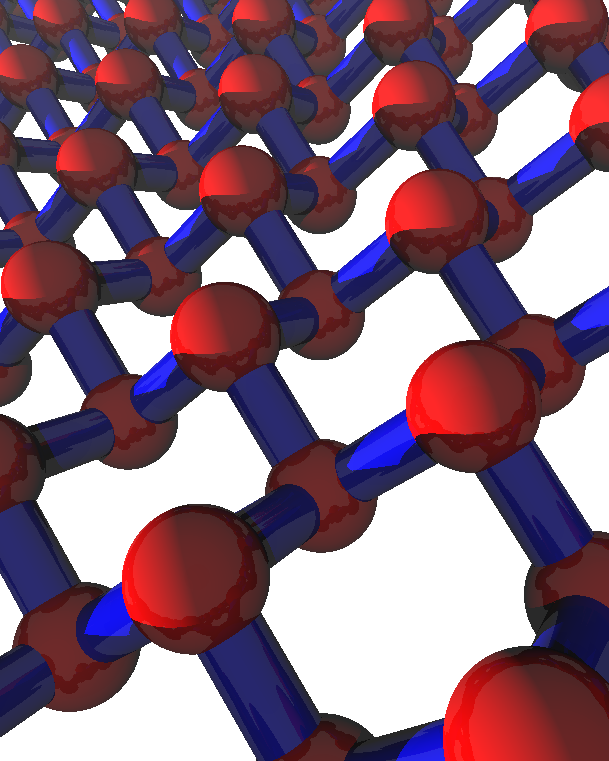}\qquad
\includegraphics[width=2.4in]{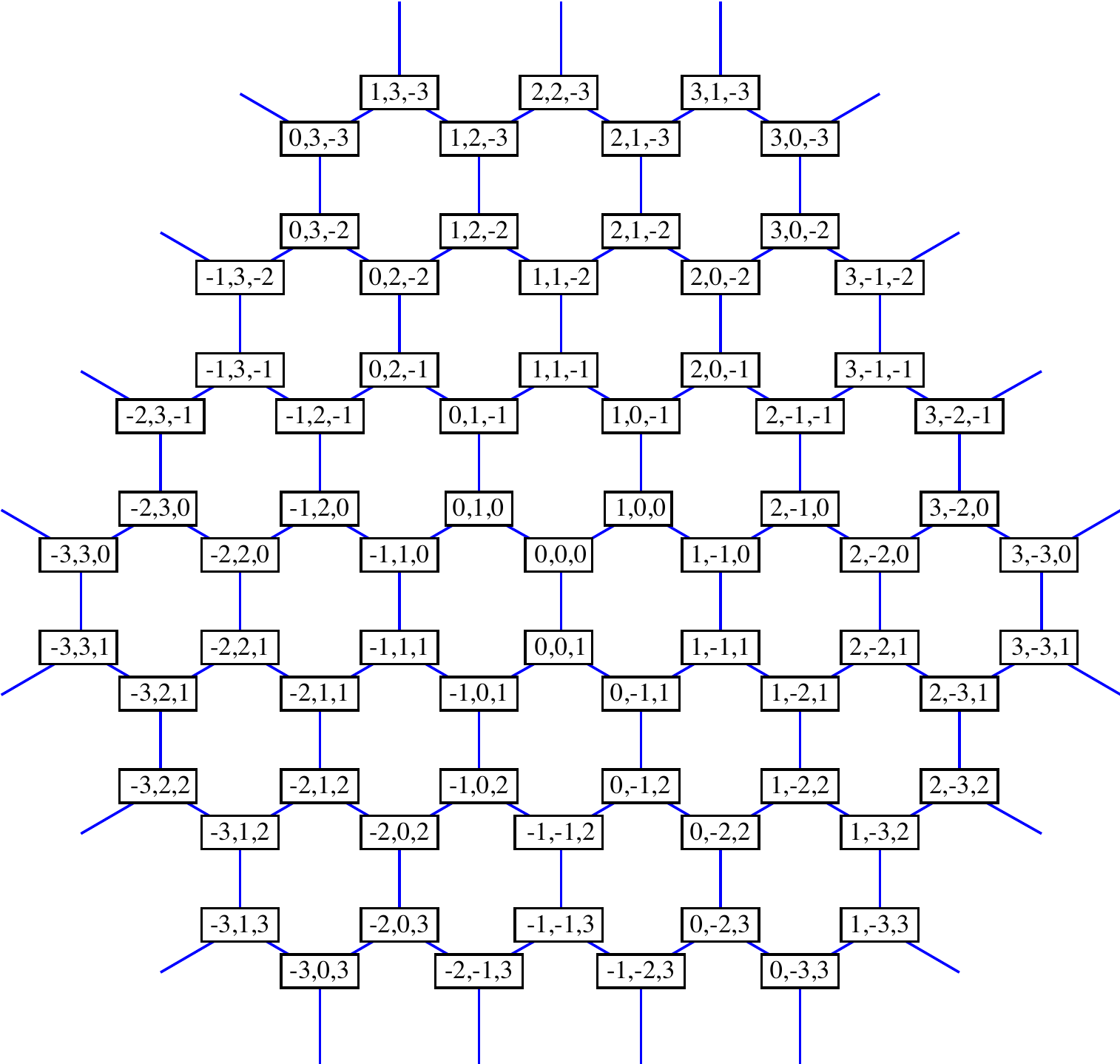}
\caption{Left: The unit distance graph formed by the integer points $\{(x,y,z)\mid x+y+z\in\{0,1\}\}$.
Right: The same graph projected onto the plane $x+y+z=0$ to form a hexagonal tiling.}
\label{fig:hexpov}
\end{figure}

\begin{figure}[t]
\centering
\includegraphics[height=2.2in]{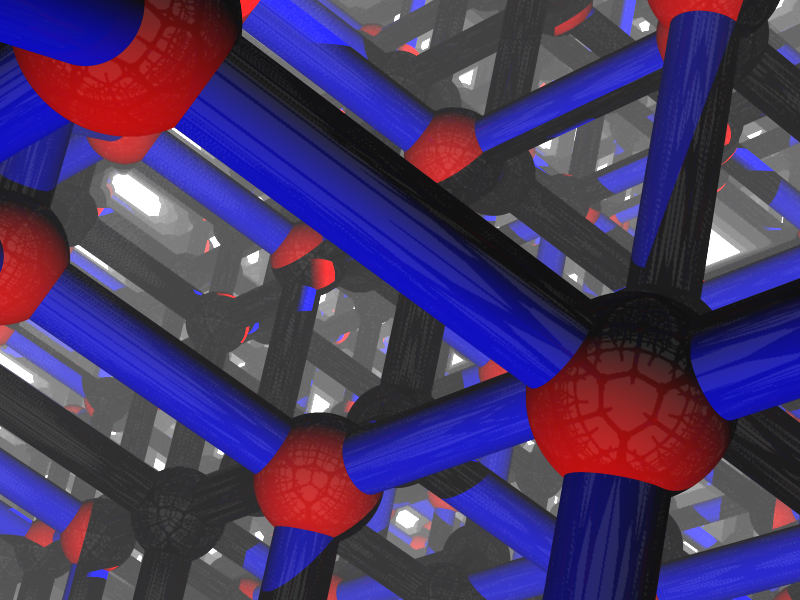}
\caption{The three-dimensional diamond graph.}
\label{fig:diapov}
\end{figure}

The three-dimensional points $\{(x,y,z)\mid x+y+z\in\{0,1\}\}$, with edges connecting pairs of points one unit apart, form a 3-regular infinite graph (Figure~\ref{fig:hexpov}, left) in which every axis-parallel line through any vertex passes through exactly one other vertex, forming a grid embedding of the graph~\cite{Epp-GD-08}.
The projection onto the plane $x+y+z=0$ gives a planar embedding in which each face is a regular hexagon (Figure~\ref{fig:hexpov}, right). We may repeat the same exercise in one higher dimension: the four-dimensional points $\{(w,x,y,z)\mid w+x+y+z\in\{0,1\}\}$, with edges connecting pairs of points one unit apart, projected into the three-dimensional subspace $w+x+y+z=0$, form an infinite 4-regular graph embedded in space in such a way that all edges have equal length and the four edges at any vertex spread in the pattern of a regular tetrahedron (Figure~\ref{fig:diapov}). This pattern of point placements and edges is realized physically by the crystal structure of diamonds, and so is often called the \emph{diamond lattice}, although it is not a lattice in the mathematical definition of the word; we call it the \emph{diamond graph}.

Analogously, we define a $k$-dimensional \emph{generalized diamond graph} in any dimension as follows. From the set of $(k+1)$-dimensional integer points such that the sum of coordinates is either zero or one, form a $(k+1)$-regular graph connecting pairs of points at unit distance. Each edge of this graph lies along one of the coordinate axes, and the edges are symmetrically arranged around any vertex. Project this graph orthogonally onto the hyperplane in which the coordinate sum of any point is zero; this projection preserves the symmetries of the original graph and produces a highly symmetric infinite $(k+1)$-regular graph embedded in $k$-dimensional space. Every point in this graph may be labeled with its $(k+1)$-dimensional coordinates, integers that sum to zero or one.

The generalized diamond graph is an isometric subset of the $(k+1)$-dimensional integer lattice: any two points may be connected by a path that has length equal to their $L_1$ distance. Thus, any finite isometric subgraph of the generalized diamond graph is a partial cube. However, not every partial cube may be embedded into the generalized diamond graphs; for instance, a square, cube, or hypercube itself may not be so embedded, because these graphs contain four-vertex cycles while the generalized diamond graphs do not. Thus we are led to the questions of which graphs are isometric diamond subgraphs, and how efficiently we may recognize them.

\section{Coherent cuts}

A \emph{cut} in a graph is a partition of the vertices into two subsets $C$ and $V\setminus C$; an edge \emph{spans} the cut if it has one endpoint in $C$ and one endpoint in $V\setminus C$. If $G=(U,V,E)$ is a bipartite graph, we say that a cut $(C,(U\cup V)\setminus C)$ is \emph{coherent} if, for every edge $(u,v)$ that spans the cut (with $u\in U$ and $v\in V$), $u$ belongs to $C$ and $v$ belongs to $(U\cup V)\setminus C$. That is, if we color the vertices black and white, all black endpoints of edges spanning the cut are on one side of the cut, and all white endpoints are on the other side.

The \emph{Djokovic--Winkler relation} of a partial cube $G$ determines an important family of cuts. Define a relation $\sim$ on edges of $G$ by $(p,q)\sim (r,s)$ if and only if $d(p,r)+d(p,s)=d(q,r)+d(q,s)$; then $G$ is a partial cube if and only if it is bipartite and $\sim$ is an equivalence relation~\cite{Djo-JCTB-73,Win-DAM-84}. Each equivalence class of $G$ spans a cut $(C,V\setminus C)$; we call $V$ and $V\setminus C$ \emph{semicubes}~\cite{Epp-EJC-05}. One may embed $G$ into a hypercube by choosing one coordinate per Djokovic--Winkler equivalence class, set to $0$ within $C$ and to $1$ within $V\setminus C$. Since this embedding is determined from the distances in $G$, the isometric embedding of $G$ into a hypercube is determined uniquely up to symmetries of the cube.

Figure~\ref{fig:coherent-desargues} depicts an example, the \emph{Desargues graph}. This is a symmetric graph on 20 vertices, the only known nonplanar 3-regular cubic partial cube~\cite{Epp-EJC-06}; it is used by chemists to model the configuration spaces of certain chemical compounds~\cite{Bal-RRC-66,Mis-ACR-70}. The left view is a more standard symmetrical view of the graph while the right view has been rearranged to show more clearly the cut formed by one of the Djokovic--Winkler equivalence classes. As can be seen in the figure, this cut is coherent: each edge spanning the cut has a blue endpoint in the top semicube and a red endpoint in the bottom semicube. The Djokovic--Winkler relation partitions the edges of the Desargues graph into five equivalence classes, each forming a coherent cut.

\begin{figure}[t]
\centering\includegraphics[width=3in]{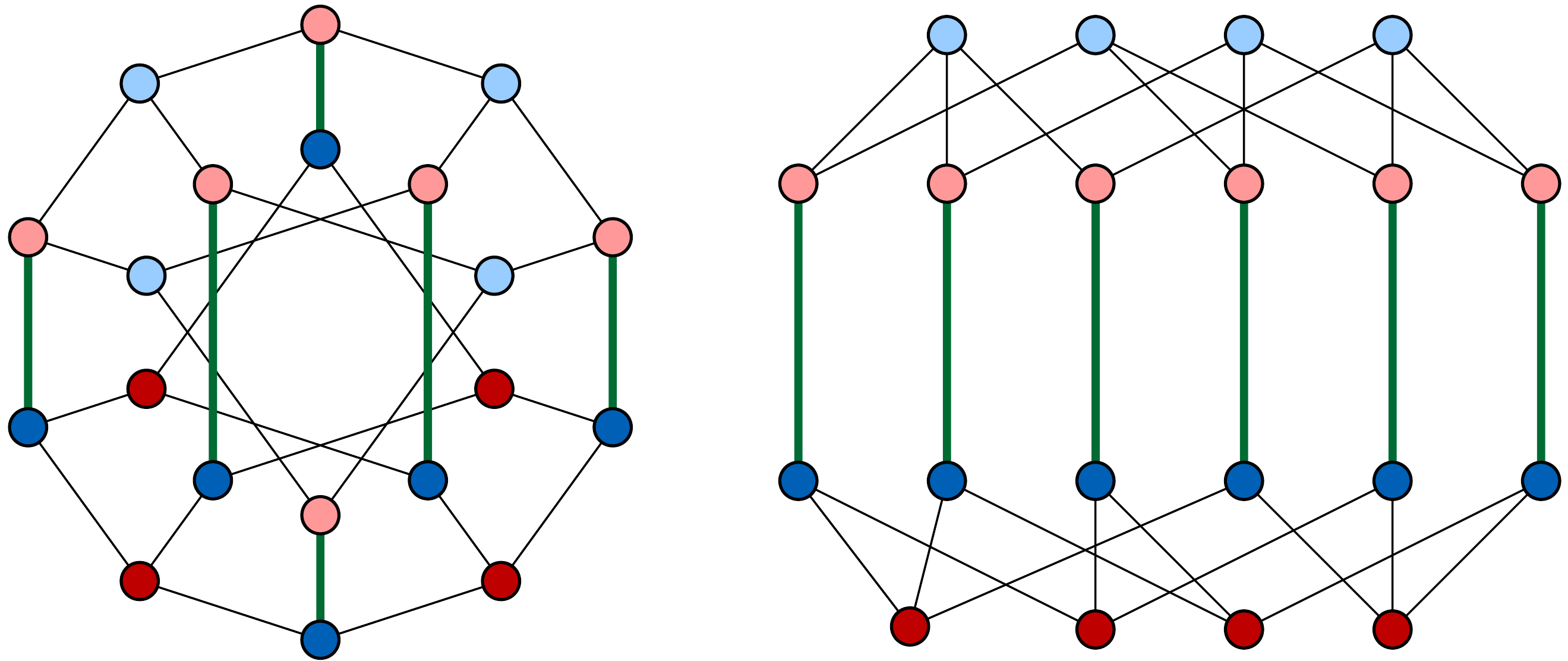}
\caption{Two views of the Desargues graph and of a coherent cut formed by a Djokovic--Winkler equivalence class.}
\label{fig:coherent-desargues}
\end{figure}

\begin{theorem}
\label{x-gdg}
A partial cube is an isometric subgraph of a generalized diamond graph if and only if all cuts formed by Djokovic--Winkler equivalence classes are coherent.
\end{theorem}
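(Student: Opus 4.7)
My plan is to prove each direction of the biconditional separately, with necessity essentially routine and sufficiency relying on a single key computation enabled by coherence. For necessity, suppose $G$ embeds isometrically into a generalized diamond graph $D$ of dimension $k$. Because $G$ is isometric in $D$, the Djokovic--Winkler relation on $E(G)$ agrees with the one on $E(D)$ restricted to $E(G)$; hence each equivalence class of $G$ is the intersection of some equivalence class of $D$ with $E(G)$. The equivalence classes of $D$ are indexed by pairs (coordinate direction $i$, integer threshold $t$): the class for $(i,t)$ consists of the edges joining a vertex with $v_i = t$ (coordinate sum $0$, black) to a vertex with $v_i = t+1$ (coordinate sum $1$, white). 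Its cut in $D$ places every black endpoint on the side $\{v_i \le t\}$ and every white endpoint on $\{v_i \ge t+1\}$, so it is coherent, and restriction to $G$ inherits this.

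For sufficiency, let $G$ be a connected partial cube with Djokovic--Winkler classes $E_1,\ldots,E_m$, each giving a coherent cut. For each $j$, coherence singles out the canonical side $C_j$ containing all black endpoints of edges of $E_j$; set $c_j(v) = 0$ if $v \in C_j$ and $c_j(v) = 1$ otherwise. The map $\phi : v \mapsto (c_1(v),\ldots,c_m(v))$ is the standard isometric embedding of $G$ into $\{0,1\}^m \subseteq \Z^m$. The key observation is that $\sigma(v) := \sum_j c_j(v)$ depends only on the color of $v$: for any edge $(u,v)$ of $G$ in class $E_j$, coherence forces $c_j(u)=0$ and $c_j(v)=1$ when $u$ is black and $v$ is white, while for every $j' \ne j$ both endpoints lie on the same side of $E_{j'}$'s cut (since only one equivalence class separates adjacent vertices), so $c_{j'}(u) = c_{j'}(v)$. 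Thus $\sigma$ increases by exactly $1$ across every black-to-white edge, and bipartiteness plus connectedness then force $\sigma \equiv B$ on black vertices and $\sigma \equiv B+1$ on white vertices for some integer $B$. Translating $\phi$ by $-(B,0,\ldots,0)$ produces an isometric embedding into $\Z^m$ whose image has coordinate sum in $\{0,1\}$, i.e., into the generalized diamond graph of dimension $m-1$. Since that diamond graph is itself isometric in $\Z^m$ (noted in Section~2), the composition is an isometric embedding of $G$ into it.

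The only place where coherence genuinely does work, and thus the main obstacle, is in the observation that $\sigma$ is a function of color. Without coherence, the orientation of some class $E_j$ would not be canonical, so some edges would increment $\sigma$ from black to white and others would decrement it, preventing $\sigma$ from being a two-valued function of color; the image would then fall outside every translated copy of the generalized diamond graph. It is worth noting that no nontrivial grouping of equivalence classes into coordinate directions is required: each class supplies its own coordinate of the target, and the constraint $\sigma \in \{0,1\}$ forced by coherence is exactly what lands the image inside the diamond graph.
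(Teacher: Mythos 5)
Your proof is correct and follows essentially the same route as the paper: the forward direction reads coherence off the coordinate/threshold description of the Djokovic--Winkler cuts of the diamond graph and restricts it to the isometric subgraph, and the converse uses the standard Djokovic--Winkler coordinatization together with the observation that coherence forces the coordinate sum to be a function of the color class. The only cosmetic difference is that the paper normalizes up front by choosing $\{-1,0,+1\}$ coordinates relative to a white base vertex (verifying the sum condition by induction on distance), whereas you take the $\{0,1\}$ embedding and translate afterward.
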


\begin{proof}
In the generalized diamond graph itself, each semicube consists of the set of points in which some coordinate value is above or below some threshold, and each edge spanning a Djokovic--Winkler cut connects a vertex below the threshold to one above it.  The bipartition of the generalized diamond graph consists of one subset of vertices for which the coordinate sum is zero and another for which the coordinate sum is one.  In an edge spanning the cut, the endpoint on the semicube below the threshhold must have coordinate sum zero and the other endpoint must have coordinate sum one, so the cut is coherent. The Djokovic--Winkler relation in any isometric subset of a generalized diamond graph is the restriction of the same relation in the generalized diamond itself, and so any isometric diamond subgraph inherits the same coherence property.

Conversely let $G$ be a partial cube in which all Djokovic--Winkler cuts are coherent; color $G$ black and white. Choose arbitrarily some white base vertex $v$ of $G$ to place at the origin of a $d$-dimensional grid, where $d$ is the number of Djokovic--Winkler equivalence classes, and assign a different coordinate to each equivalence class, where the $i$th coordinate value for a vertex $w$ is zero if $v$ and $w$ belong to the same semicube of the $i$th equivalence class, $+1$ if $v$ belongs to the white side and $w$ belongs to the black side of the $i$th cut, and $-1$ if $v$ belongs to the black side and $w$ belongs to the white side of the cut. This is an instance of the standard embedding of a partial cube into a hypercube by its Djokovic--Winkler relationship, and (by induction on the distance from $v$) every vertex has coordinate sum either zero or one. Thus, we have embedded $G$ isometrically into a $d$-dimensional generalized diamond graph.
\end{proof}

For example, the Desargues graph is an isometric subgraph of a five-dimensional generalized diamond.

\section{The diamond dimension}

Theorem~\ref{x-gdg} leads to an algorithm for embedding any isometric diamond subgraph into a generalized diamond graph, but possibly of unnecessarily high dimension. Following our previous work on \emph{lattice dimension}, the minimum dimension of an integer lattice into which a partial cube may be isometrically embedded~\cite{Epp-EJC-05}, we define the \emph{diamond dimension} of a graph $G$ to be the minimum dimension of a generalized diamond graph into which $G$ may be isometrically embedded. The diamond dimension may be as low as the lattice dimension, or (e.g., in the case of a path) as large as twice the lattice dimension.
We may compute the diamond dimension in polynomial time, as we now show. The technique is similar to that for lattice dimension, but becomes somewhat simpler in the generalized diamond case.

Color the graph black and white, and let $(C,V\setminus C)$ and $(C',V\setminus C')$ be two cuts determined by equivalence classes of the Djokovic--Winkler relation, where $C$ and $C'$ contain the white endpoints of the edges spanning the cut and the complementary sets contain the black endpoints. Partially order these cuts by the set inclusion relationship on the sets $C$ and $C'$:  $(C,V\setminus C) \le (C',V\setminus C')$ if and only if $C\subseteq C'$. The choice of which coloring of the graph to use affects this partial order only by reversing it.
A \emph{chain} in a partial order is a set of mutually related elements, an \emph{antichain} is a set of mutually unrelated elements, and the \emph{width} of the partial order is the maximum size of an antichain. By Dilworth's theorem~\cite{Dil-AM-50} the width is also the minimum number of chains into which the elements may be partitioned. Computing the width of a given partial order may be performed by transforming the problem into graph matching, but even more efficient algorithms are possible, taking time quadratic in the number of ordered elements and linear in the width~\cite{FelRagSpi-Ord-03}.

\begin{theorem}
\label{thm:dd}
The diamond dimension of any isometric diamond subgraph $G$, plus one, equals the width of the partial order on Djokovic--Winkler cuts defined above.
\end{theorem}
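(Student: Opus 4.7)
The plan is to sandwich the diamond dimension between $w-1$ and $w-1$, where $w$ denotes the width of the partial order on Djokovic--Winkler cuts. The upper bound (diamond dimension at most $w-1$) will use Dilworth's theorem to partition the D--W classes into $w$ chains, one per coordinate of the embedding. The lower bound (diamond dimension at least $w-1$) will start from any isometric embedding into a $k$-dimensional generalized diamond and exhibit a covering of the D--W classes of $G$ by $k+1$ chains, one per coordinate axis, forcing $w \le k+1$ via Dilworth.

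For the upper bound, the chain structure is what makes each coordinate well-defined. I would fix a chain $C_1 \subseteq C_2 \subseteq \cdots \subseteq C_m$ of cuts, writing $C_i$ for the side containing white endpoints as in Theorem~\ref{x-gdg}. These inclusions force $\{i : v \in C_i\}$ to be upward closed in $\{1,\ldots,m\}$, so I may write it as $\{k_v + 1, \ldots, m\}$. Defining the chain's coordinate by $x_{\mathcal{C}}(v) = k_v^{\mathcal{C}} - k_{v_0}^{\mathcal{C}}$ relative to a chosen white base vertex $v_0$, coherence guarantees that each edge of $G$ lies in a single D--W class and crosses it from white (in $C$) to black (in $V\setminus C$); a direct computation shows that $x_{\mathcal{C}}$ increases by exactly $1$ from the white to the black endpoint of any edge in that chain, while every other chain coordinate is unchanged. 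Summing the $w$ coordinates gives a function that equals $0$ at $v_0$ and jumps by $+1$ along every white-to-black edge, so the bipartition of $G$ forces every coordinate sum to lie in $\{0,1\}$. Isometry follows because within each chain the sets $\{i : u \in C_i\}$ and $\{i : v \in C_i\}$ are nested; their symmetric difference has size $|x_{\mathcal{C}}(u) - x_{\mathcal{C}}(v)|$, and summed over all chains this recovers the total number of D--W classes separating $u$ from $v$, which equals $d(u,v)$.

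For the lower bound, I would start from an isometric embedding of $G$ into a $k$-dimensional generalized diamond, equivalently a labelling by $(k+1)$-tuples of integers with coordinate sum in $\{0,1\}$. The D--W relation on $G$ is the restriction of that on the diamond, and every D--W class in the full diamond consists of the unit edges whose $i$-th coordinate jumps from $t$ to $t+1$ for some axis $i \in \{1,\ldots,k+1\}$ and integer $t$. So each D--W class of $G$ inherits a label $(i,t)$; classes sharing an axis $i$ yield nested cut sides $\{v \in V(G) : v_i \le t\}$ that are comparable in the poset, and each axis therefore contributes a chain, partitioning the D--W classes into at most $k+1$ chains. Dilworth's theorem then gives $w \le k+1$. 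The main obstacle will be the sum-in-$\{0,1\}$ verification in the upper bound, where coherence is really used: one must check that the sign conventions on the chain coordinates line up with the bipartition so that the total sum stays pinned rather than drifting, which amounts to applying Theorem~\ref{x-gdg} chain by chain.
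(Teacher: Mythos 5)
Your proposal is correct and follows essentially the same route as the paper: both directions rest on the correspondence between chains of the cut order and coordinate axes of the $(d+1)$-dimensional lattice containing the generalized diamond, with Dilworth's theorem converting the width into a chain partition for the embedding and the axis-labelling of Djokovic--Winkler classes giving the reverse bound. Your write-up is somewhat more explicit than the paper's on the isometry check (via nested cut sets within a chain) and the coordinate-sum-in-$\{0,1\}$ verification, but these are elaborations of the same argument rather than a different approach.
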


\begin{proof}
First, the diamond dimension plus one is greater than or equal to the width of the partial order. For, suppose that $G$ is embedded as an isometric subgraph of a $d$-dimensional generalized diamond graph; recall that this graph may itself be embedded isometrically into a $(d+1)$-dimensional grid. We may partition the partial order on cuts into $d+1$ chains, by forming one chain for the cuts corresponding to sets of edges parallel to each of the $d+1$ coordinate axes. The optimal chain decomposition of the partial order can only use at most as many chains.

In the other direction, suppose that we have partitioned the partial order on cuts into some family of $d+1$ chains. We may use this partition to embed $G$ isometrically into a $d$-dimensional generalized diamond graph: we let each chain correspond to one dimension of a $(d+1)$-dimensional integer lattice, place an arbitrarily-chosen white vertex at the origin, and determine the coordinates of each vertex by letting traversal of an edge in the direction from white to black increase the corresponding lattice coordinate by one unit. Each other vertex is connected to the origin by a path that either has equal numbers of white-to-black and black-to-white edges (hence a coordinate sum of zero) or one more white-to-black than black-to-white edge (hence a coordinate sum of one). Thus, the diamond dimension of $G$ is at most $d$.
As we have upper bounded and lower bounded the diamond dimension plus one by the width, it must equal the width.
\end{proof}

Thus, we may test whether a graph may be embedded into a generalized diamond graph of a given dimension, find the minimum dimension into which it may be so embedded, and construct an embedding of minimum dimension, all in polynomial time. To do so, find a partial cube representation of the graph, giving the set of Djokovic--Winkler cuts~\cite{Epp-SODA-08}, form the partial order on the cuts, compute an optimal chain decomposition of this partial order~\cite{FelRagSpi-Ord-03}, and use the chain decomposition to form an embedding as described in the proof of the theorem.
It would be of interest to find more general algorithms for testing whether a graph may be isometrically embedded into any periodic tiling of the plane, or at least any periodic tiling that forms an infinite partial cube. Currently, the only such tilings for which we have such a result are the square tiling~\cite{Epp-EJC-05} and (by the dimension two case of Theorem~\ref{thm:dd}) the hexagonal tiling.

\raggedright
\bibliographystyle{abbrv}
\bibliography{media}

\end{document}